\documentclass{llncs}
\usepackage{amsfonts,amssymb,amsmath}

\title{Regular realizability problems and  regular languages}
\author{A. Rubtsov\thanks{Supported in part by RFBR grant
    14--01--00641.}
}

\institute{Moscow Institute of Physics and Technology\\
\and National Research University Higher School of Economics\\
\email{rubtsov99@gmail.com}
}

\date{}
\hfuzz1.5pt

\spnewtheorem{Def}{Definition}{\bfseries}{\upshape}
\spnewtheorem{prop}{Proposition}{\bfseries}{\itshape}
\spnewtheorem{Claim}{Claim}{\bfseries}{\itshape}
\spnewtheorem*{known}{Theorem}{\bfseries\upshape}{\itshape}
\spnewtheorem*{Lemma}{Lemma}{\bfseries\upshape}{\itshape}
\let\eps\varepsilon

\let\leq\leqslant

\let\es\varnothing

\def\L{{\mathbf{L}}}

\def\NL{\ensuremath{\mathbf{NL}}}

\def\A{\ensuremath{\mathcal A}}

\def\B{\ensuremath{\mathcal B}}

\def\reg{\mathrm{RR}}

\def\leGen#1#2{\mathbin{\leq^{\mathrm{#2}}_{\mathrm{#1}}}}

\def\lelog{\leGen{log}{}}

\def\ledfst{\leGen{dfst}{}}

\begin{document}

\maketitle

\begin{abstract} 
We investigate regular realizability (RR) problems, which are the problems
of verifying whether intersection of a regular language -- the input of the
problem -- and fixed language called filter is non-empty. We consider two kind of problems depending on representation of regular language. If a regular language on input is represented by a DFA, then we obtain (deterministic) regular realizability problem and we show that in this case the complexity of regular
realizability problem for an arbitrary regular filter is either $\L$-complete or $\NL$-complete. We also show that in case of representation regular language on input by NFA the problem is always $\NL$-complete.
\end{abstract}

\section{Introduction}

The regular realizability problems are the problems
of verifying whether intersection of a regular language -- the input of the
problem -- and fixed language called filter is non-empty. Filter $F$ is a
parameter of the problem. Depending on representation of a regular
language we distinguish the deterministic RR problems $\reg(F)$ and
the nondeterministic ones $\reg^n(F)$, which are
corresponds to the description of the regular language either by a
deterministic or by a  nondeterministic finite automaton.

The main question of studying regular realizability problems is the investigation of it's algorithmic complexity depending on a filter. Algorithmic complexity of corresponding regular realizability problem is a kind of a complexity measure on languages. Investigation of the problems algorithmic complexity in case of regular filters is a natural question. 
Moreover, the relation between algorithmic complexities of $\reg(F)$ and
$\reg^n(F)$ is still unknown, but only in the case of regular filters we know the separation modulo $\L \neq \NL$ conjecture. Our main result is the separation of regular languages into two classes: the first class contains languages with correspondent deterministic RR-problems belong to class $\L$ and the second class contains languages with correspondent  $\NL$-complete deterministic RR-problems.

Deterministic regular realizability problems corresponds to a computational model, called Generalized Nondeterministic Automata (GNA).  We investigate this model in section \ref{secGNA}.

\section{RR-problems and deterministic finite state transductions} 

In this section we define RR-problems formally and show how its algorithmic complexity relates to deterministic finite state transductions on filters. We study the deterministic version ($\reg(F)$) as the main one, because as we show further the nondeterministic version is too powerful for regular filters -- for all nonempty regular languages corresponding $\reg^n$-problems are $\NL$-complete.

\begin{Def}
	The regular realizability problem $\reg(F)$ is the problem
	of verifying non-emptiness of the intersection of the filter $F$
with a regular language $L(\A)$, where $\A$ is a DFA. 
Formally
	$$\reg(F) = \{ \A \mid \A \text{ is DFA and } L(\A)\cap F
	\neq \es  \}. $$ 
\end{Def}

In the same way we define the nondeterministic version:

$$\reg^n(F) = \{ \A \mid \A \text{ is NFA and } L(\A)\cap F
\neq \es  \}. $$

Since we are going to consider classes $\L$ and $\NL$, we choose the logspace reduction for RR-problems. We say that filter $F_1$ dominates filter $F_2$ if $\reg(F_2) \lelog \reg(F_1)$. The natural goal is to describe dominance relation on filters by some structural properties of languages. Even finding relation on filters, that respects dominance relation is a hard problem: we know only one such relation -- deterministic finite state transduction.  We define this relation as the relation provided by deterministic finite state transducer. First we recall the definition of finite state transducer, which is also known as rational transducer.

Formally a  finite state transducer is defined by tuple 
$T = (A, B, Q, q_0, \delta, F)$, where 
$A$ is the input alphabet, $B$ is the output alphabet, 
$Q$ is the (finite) state set, $q_0$ is the initial
state,  $F \subseteq Q$ is the set of accepting states and 
$\delta\colon Q \times (A\cup\eps) \times
(B\cup\eps)\times Q $ is the
transition
relation. 

As in case of automata, we say finite state transducer to be deterministic if transition relation $\delta$ is a function.

Consider two DFSTs $T_1$ and $T_2$. We say that a DFST
$T= T_1 \circ T_2$ is the composition of $T_1$ and $T_2$ if $T(x) = y$ iff $T_1(x) = u$ and $T_2(u) = y$.

Define the composition of transducer $T$ and automaton $\A$ in the
same way: we say that automaton $\B = T \circ \A$ recognizes language $\{x \,|\,  T(x) = y \in L(\A) \}$. 


The following proposition is an algorithmic version of Elgot-Mezei
theorem (see, e.g., \cite[Th. 4.4]{Be09}).

\begin{prop}\label{compose}
 The composition of transducers and the composition of a transducer
   and an automaton are computable in deterministic log space.
\end{prop}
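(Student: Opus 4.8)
The plan is to carry out the classical cascade (synchronised product) construction underlying the Elgot--Mezei theorem and to verify that, beyond the usual closure statement, every component of the output can be produced using only logarithmic work space. Throughout let $T_1 = (A,B,Q_1,q_0^1,\delta_1,F_1)$ and $T_2 = (B,C,Q_2,q_0^2,\delta_2,F_2)$. I would give the composite $T_1\circ T_2$ the state set $Q_1\times Q_2$, the initial state $(q_0^1,q_0^2)$ and the accepting set $F_1\times F_2$. The state set has size $|Q_1|\cdot|Q_2|$, which is polynomial in the input length $n$, so it can be enumerated by a pair of nested counters, each of $O(\log n)$ bits; the initial and accepting states are produced by the same enumeration together with membership tests for $F_1$ and $F_2$.

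For the transition relation I would synchronise the output alphabet of $T_1$ with the input alphabet of $T_2$ and allow the two machines to \emph{stutter} on $\eps$. Concretely, $((p_1,p_2),a,c,(p_1',p_2'))$ is a transition of $T_1\circ T_2$ exactly in one of three situations: (i) there is $b\in B$ with $(p_1,a,b,p_1')\in\delta_1$ and $(p_2,b,c,p_2')\in\delta_2$; (ii) $c=\eps$, $p_2'=p_2$ and $(p_1,a,\eps,p_1')\in\delta_1$; or (iii) $a=\eps$, $p_1'=p_1$ and $(p_2,\eps,c,p_2')\in\delta_2$. The composition $T\circ\A$ of a transducer $T$ with a DFA $\A$ over $B$ is built identically: the states of $\B=T\circ\A$ are pairs of a state of $T$ and a state of $\A$, a move of $T$ reading $a$ and emitting $b\in B$ is combined with the $\A$-transition reading $b$, and the case $b=\eps$ leaves the $\A$-component unchanged, with $F$ taken to be the product of the two accepting sets.

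To see that this is a deterministic log-space transformation, note that producing the transition list amounts to iterating over all tuples $(p_1,p_2,a,p_1',p_2')$ and, for each, testing the three conditions above. Each test is a bounded number of membership queries of the form $(p,a,b,p')\in\delta_i$, answered by scanning the input description of $\delta_i$ and comparing the currently addressed indices. All of this uses only a constant number of pointers and counters, each holding an index of an alphabet symbol or of a state, hence $O(\log n)$ bits in total; nothing beyond these indices is ever stored, and every produced symbol is sent straight to the write-only output tape. Since the list of candidate tuples has polynomial length and each is processed in logarithmic space, the whole construction lies in $\L$.

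The step that needs care --- and the one I expect to be the main obstacle --- is the treatment of $\eps$ in the synchronisation, both for correctness and for preserving the determinism required by the definition of a DFST. The naive product may interleave the stuttering moves (ii) and (iii), so to keep $\delta$ single-valued I would fix a scheduling discipline, for instance giving priority to the $\eps$-input moves of $T_2$ and only then to synchronised or $T_1$-driven moves, and argue that, because $T_1$ and $T_2$ are themselves deterministic, this schedule is forced at every combined state. The point is that such scheduling is a purely local, finite-state matter: it changes which transitions are emitted but not the fact that each is decided by a constant number of local membership tests, so the logarithmic space bound is unaffected. The remaining routine check is that the composite realises the intended relation, namely $(T_1\circ T_2)(x)=y$ iff $T_1(x)=u$ and $T_2(u)=y$, which follows from the standard correctness argument for the cascade construction.
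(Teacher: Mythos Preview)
Your argument is correct: the synchronised product is exactly the standard construction, and your accounting of the space --- a constant number of $O(\log n)$-bit indices into the two input descriptions --- is the right way to see the $\L$ bound. Your caveat about scheduling $\eps$-moves to keep the composite deterministic is also the genuine technical point here, and the priority discipline you propose is adequate.

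There is nothing to compare against, however: the paper does not prove this proposition. It is stated without proof as ``an algorithmic version of Elgot--Mezei'' with a pointer to Berstel's book, and is used as a black box in Lemma~\ref{transductions-vs-logspace-reductions}. So you have supplied strictly more than the paper does; your write-up is essentially the proof the paper elides.
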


We say that filter $F_1$ \emph{covers} filter $F_2$ if there exists such dfst $T$, that $F_2 = T(F_1)$. We also write this as $F_2 \ledfst F_1$.

\begin{lemma}\label{transductions-vs-logspace-reductions}
	If $F_1 \ledfst F_2$ then $\reg(F_1) \lelog \reg(F_2) $.
\end{lemma}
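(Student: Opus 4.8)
The plan is to take the composition operation itself as the reduction. Unpacking the hypothesis, $F_1 \ledfst F_2$ means there is a fixed dfst $T$ with $F_1 = T(F_2)$; if $A$ and $B$ denote the input and output alphabets of $T$, then $F_2 \subseteq A^*$ and $F_1 = T(F_2) \subseteq B^*$, so an instance $\A$ of $\reg(F_1)$ is a DFA over $B$. The reduction I would use sends such an $\A$ to $\B = T \circ \A$, the automaton recognizing the preimage $\{x \mid T(x) \in L(\A)\}$ over the alphabet $A$. Because $T$ is a fixed parameter of the covering --- a constant independent of the input $\A$ --- Proposition~\ref{compose} immediately gives that $\B$ is computable from $\A$ in deterministic log space, which is exactly the bound demanded by $\lelog$.

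The mathematical core is the equivalence
$$ L(\A) \cap F_1 \neq \es \Iff L(\B) \cap F_2 \neq \es, $$
which I would prove by chasing the image and preimage of $T$. For the forward direction, choose $w \in L(\A) \cap F_1$; since $F_1 = T(F_2)$, we can write $w = T(x)$ with $x \in F_2$, and then $T(x) = w \in L(\A)$ shows $x \in L(\B)$, so $x \in L(\B) \cap F_2$. Conversely, given $x \in L(\B) \cap F_2$, the definition of $\B$ yields $T(x) \in L(\A)$, while $x \in F_2$ forces $T(x) \in T(F_2) = F_1$; hence $T(x) \in L(\A) \cap F_1$. Thus $\A \mapsto \B$ preserves and reflects membership, so it is a valid logspace reduction from $\reg(F_1)$ to $\reg(F_2)$.

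The step I expect to require the most care is verifying that $\B$ is itself a DFA, so that it is a legitimate instance of $\reg(F_2)$ (recall that $\reg$ accepts only DFAs). The construction keeps the product of the current state of $T$ and of $\A$, feeding each output symbol emitted by $T$ into $\A$; as both $T$ and $\A$ are deterministic, the transition induced on each input letter is uniquely determined. The one delicate point is the bookkeeping for $\eps$-input moves of $T$, where one must invoke the functionality of $\delta$ to rule out branching; this is exactly the deterministic instance of the Elgot--Mezei construction underlying Proposition~\ref{compose}, and it is the only place where the determinism hypotheses are essential --- without it the reduction would land in $\reg^n(F_2)$ rather than $\reg(F_2)$.
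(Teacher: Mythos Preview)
Your proof is correct and follows exactly the paper's approach: map $\A$ to $\B = T \circ \A$ and invoke Proposition~\ref{compose} for the logspace bound. You have simply fleshed out the correctness equivalence $L(\A)\cap F_1\neq\es \Iff L(\B)\cap F_2\neq\es$ and the determinism of $\B$, both of which the paper leaves implicit.
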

\begin{proof}
	Let $T$ be a deterministic finite state transducer such that $F_1 = T(F_2)$
	and let $\A$ be an input of the $\reg(F_1)$ problem. Build the
	automaton $\B = T \circ \A $ and use it as an input of the
	$\reg(F_2)$ problem. 
	It gives the log space reduction due to
	Proposition~\ref{compose}.  
\end{proof}

We obtain similar results on relation between $\reg^n$ problems and rational (not necessary deterministic) transductions in the paper about RR-problems and context-free languages \cite{RV_DCFS2015}.

Now we divide the class of regular languages into two parts. We call regular filter $F$ \emph{hard} if $F$ covers an arbitrary regular language. It means that for every regular language $R$ there exists DFST $T$, such that $R = T(F)$; in the other case we call regular filter $F$ \emph{easy}.

\begin{prop}\label{REGtypes}
	Regular language $F$ is hard iff an arbitrary deterministic automaton, recognizing $F$, has paths $q \xrightarrow{u} q $, $q \xrightarrow{v} q $, for some state $q$ and some words $u \neq vw, v \neq uw$.
\end{prop}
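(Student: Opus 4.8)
The plan is to prove the two implications separately, using the dichotomy between polynomial and exponential growth of regular languages as the organizing principle. Throughout I may assume the automaton is \emph{trimmed} (every state reachable from the initial state and co-reachable from an accepting one), since deleting useless states changes neither $F$ nor the existence of a \emph{useful} state carrying two incomparable loops; the structural condition is thus a property of $F$ and holds in one trimmed automaton iff it holds in every one. I read ``$u\neq vw$, $v\neq uw$'' as: neither of the loop labels $u,v$ is a prefix of the other, i.e. $\{u,v\}$ is a prefix code. I will show (S)$\Rightarrow F$ is hard, and $\lnot$(S)$\Rightarrow F$ is not hard.

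For (S)$\Rightarrow$hard, fix a useful state $q$ with loops $q\xrightarrow{u}q$, $q\xrightarrow{v}q$, let $\alpha$ label a path from the initial state to $q$ and $\beta$ a path from $q$ to an accepting state, so that $\alpha\{u,v\}^*\beta\subseteq F$. Since neither $u$ nor $v$ is a prefix of the other, the morphism $a\mapsto u$, $b\mapsto v$ is injective with a deterministically (left-to-right) decodable image; composing a block encoding of an arbitrary finite alphabet into $\{a,b\}^*$, every regular $R$ can be addressed by codewords lying in $\{u,v\}^*$. Given a DFA $M_R$ for $R$, I build a DFST $T$ that (i) verifies that its input has the shape $\alpha c\beta$ with $c\in\{u,v\}^*$ (the fixed words $\alpha,\beta$ handled by the finite control), rejecting on every other word of $F$ and hence contributing nothing to $T(F)$; (ii) decodes $c$ block by block, outputs the corresponding word $r$, and simultaneously simulates $M_R$ on $r$, accepting exactly when $M_R$ does. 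Then $T(F)=R$: each accepted input yields a word of $R$, and conversely every $r\in R$ is produced from $\alpha\,\mathrm{code}(r)\,\beta\in F$. As $R$ was arbitrary, $F$ is hard.

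For $\lnot$(S)$\Rightarrow$not hard I argue by growth. If no useful state carries two incomparable loops, then at every state all loop labels are pairwise prefix-comparable; the standard commutation argument (pumping two comparable loops to equal-length, hence equal, words forces $uv=vu$) shows that all loops through a given state are powers of one primitive word. Consequently each strongly connected component of the trimmed automaton collapses to a single cycle, and every accepting path decomposes as a bounded alternation of transient edges and cyclic blocks. Hence $F\subseteq\bigcup_{i} w_0^{(i)}\bigl(p_1^{(i)}\bigr)^{*}w_1^{(i)}\cdots\bigl(p_k^{(i)}\bigr)^{*}w_k^{(i)}$ is a finite union of bounded expressions, so $|F\cap\Sigma^{\leq n}|=O(n^{k})$, i.e. $F$ has polynomial growth.

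Finally I close the loop: a DFST is a function whose output length is linearly bounded in the input length, so it cannot increase growth type, $|T(F)\cap\Sigma^{\leq n}|\leq|F\cap\Sigma^{\leq Cn}|=O(n^{k'})$ for suitable constants. Since $\{a,b\}^*$ has $2^{n+1}-1$ words of length at most $n$, no DFST maps $F$ onto $\{a,b\}^*$, so $F$ does not cover the regular language $\{a,b\}^*$ and is therefore not hard. The main obstacle is the $\lnot$(S) side: making rigorous both the combinatorial step (prefix-comparability of all loops at a state forces a common primitive period, collapsing each component to one cycle and yielding the bounded form for $F$) and the length-linearity of DFSTs that underlies growth preservation. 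The (S)$\Rightarrow$hard construction, by contrast, is a routine though careful exercise in deterministic prefix-code parsing combined with simulating $M_R$.
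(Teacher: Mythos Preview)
Your forward direction (S)$\Rightarrow$hard is correct and essentially the paper's argument; the paper factors it as ``first map $F$ onto $\Sigma^*$, then compose with the identity-DFST of $R$'', while you fold the simulation of $M_R$ into a single transducer, but this is cosmetic. Your structural analysis in the converse direction is also correct: if no useful state carries two prefix-incomparable loops then every nontrivial strongly connected component is a single simple cycle (a state with out-degree~$\geq 2$ inside an SCC would immediately yield two loops with distinct first letters), and hence $F$ is a finite union of expressions $w_0p_1^{*}w_1\cdots p_k^{*}w_k$, i.e.\ bounded.

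The gap is in the final ``growth'' step. From ``output length is linearly bounded in the input length'' you deduce $|T(F)\cap\Sigma^{\leq n}|\leq|F\cap\Sigma^{\leq Cn}|$, but the linear bound points the wrong way: it says a \emph{short input} gives a short output, not that a short output must have a short preimage. Erasing DFSTs defeat this, and in fact polynomial growth is \emph{not} preserved by DFSTs in general. Concretely, let $F=\{a^{2^k}bw : k\geq 0,\ w\in\{a,b\}^k\}$; then $|F\cap\Sigma^{\leq N}|\leq 2N$, yet the DFST that erases everything up to and including the first~$b$ and copies the rest maps $F$ onto $\{a,b\}^*$. So you cannot conclude ``$T(F)$ has polynomial growth'' from ``$F$ has polynomial growth''; you must use that $F$ is bounded \emph{regular}, not merely sparse.

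The honest fix is the pigeonhole argument the paper carries out (phrased there via periodic factors): on an input $w_1^{n_1}\cdots w_k^{n_k}$ the state of $T$ after each $w_i$-block is eventually periodic in $n_i$, so the output during the $i$-th block is (up to a bounded prefix and suffix) a power of some fixed word $v_i$. Hence $T(w_1^{*}\cdots w_k^{*})$ is itself a finite union of bounded expressions, therefore of polynomial growth, and cannot equal $\{a,b\}^*$. Once you insert this argument, your route and the paper's coincide; the ``growth type'' packaging is fine, but it does not let you skip the periodicity step.
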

\begin{proof}
	Consider automaton $\A$, recognizing $F$, such that from each state $s$ there is a path to some accepting state. By the condition of the proposition there is some state $q$ and words $u, v$, such that $q \xrightarrow{u} q $, $q \xrightarrow{v} q $, $u \neq v$, moreover $u$ is not a prefix of $v$ and $v$ is not a prefix of $u$. Let $p$ be the path $q_0 \xrightarrow{p} q$ from the initial state to state $q$ and $s$ be the pass $q \xrightarrow{s} q_f$ from state $q$ to some accepting state $q_f$.

	First we build DFST $T$, which maps $F$ to $\Sigma^*$. Assume, without loss of generality, that we work with the binary alphabet $\Sigma = \{a,b\}$. Describe the construction of $T$. First transducer $T$ expects the word $p$ on the input, processes it and writes nothing. Than if $T$ reads word $u$, it writes letter $a$, if $T$ reads word $v$, it writes letter $b$. Since $u$ is not a prefix of $v$ and $v$ is not a prefix of $u$, transducer $T$ writes $a$ and $b$ independently. If $T$ reads word $s$ it goes to the (only) accepting state.

It is clear that $T(F) = \Sigma^*$. Now it is easy to build DFST $T_R$, which maps $\Sigma^*$ to regular language $R$ and to build the composition $T \circ T_R$ as a resulting transducer $T^\prime$, which maps $F$ on $R$. To build $T_R$ we take an arbitrary DFA $\A$, recognizing $R$ and turn it to DFST $T_R$ by adding output tape and modifying transition function by adding to output the letter of transition: if the automaton has transition $\delta_\A(q,a) = q^\prime$, then the transducer has transition  $(q,a,a,q^\prime) \in \delta_{T_R}$.

Let us call cycle the path of form $q \xrightarrow{u} q$. Now we prove that if there exists an automaton, recognizing $F$, without distinct cycles, than there is no transducer $T$, that maps $F$ to $\Sigma^*$.  Notice that in this case language $F$ can be described by regular expression consisting of finite union of expressions of form $px_1^*y_1x_2^*y_2\ldots x_n^*y_ns$ -- since there are no two distinct cycles, if state  $q_i$ has some nonempty cycle, then there is word $x_i$, such that each path $q_i \xrightarrow{w} q_i$ can be described as $w = x_i^k$.  It is easy to see, that in case of one expression of form $px^*s$, language $F$ doesn't cover $\Sigma^*$. Indeed, if there is a transducer $T$, such that $T(px^*s) = \Sigma^*$ then for long enough word $w$ from $\Sigma^*$ should exist a long enough word from $px^*s$. But since transducer $T$ has finitely many states there are such numbers $n$ and $k$, that $q_0 \xrightarrow{px^n} q$ and $q_0 \xrightarrow{px^{n+k}} q$, so we get that each long enough word from $T(px^*s)$ has a periodic subword, and come to contradiction -- there are words without periodic factors. Using the pigeon-hole principle again we obtain, the same contradiction in general case: for the expressions of form $px_1^*y_1x_2^*y_2\ldots x_n^*y_ns$ and thus we obtain the same contradiction for their finite union. Finite union is contained in some regular language of form $w_1^*w_2^*\ldots w_n^*$ and since that language does not cover $\Sigma^*$, finite union also doesn't.

\end{proof}

Recall, that regular language $R$ is called \emph{bounded} if there are such words $w_1,\ldots w_n$,  that $R \subseteq w_1^*w_2^*\ldots w_n^*$. From the proof of proposition \ref{REGtypes} we obtain the corollary.

\begin{corollary}
	Regular language is easy iff it is a bounded regular language.
\end{corollary}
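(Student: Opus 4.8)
The plan is to prove the two implications separately, noting that each is already essentially contained in the proof of Proposition~\ref{REGtypes}. By definition $F$ is easy precisely when it is not hard, and the proposition characterizes hardness through the existence of two incomparable cycles $q \xrightarrow{u} q$, $q \xrightarrow{v} q$ (neither of $u,v$ a prefix of the other) in every trimmed DFA for $F$. So, contrapositively, $F$ is easy iff \emph{some} trimmed DFA recognizing $F$ — one in which every state can reach an accepting state — has \emph{no} state carrying a pair of incomparable cycles. This reformulation is the bridge to boundedness, and I would state it first.

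For the direction ``easy $\Rightarrow$ bounded'', I would fix such a DFA. Since no state carries two incomparable cycles, every cycle through a state that lies on some nonempty cycle is a power of one fixed word; this is exactly the normal form isolated in the proof of Proposition~\ref{REGtypes}, giving $F$ as a finite union of languages of the form $p x_1^* y_1 x_2^* y_2 \cdots x_m^* y_m s$. A single such term embeds in the product of the stars of all its factor words (the fixed factors $p, y_i, s$ appearing with exponent at most one), hence is bounded, and a finite union of bounded languages is again bounded by concatenating the defining lists. Thus $F \subseteq w_1^* w_2^* \cdots w_N^*$, which is precisely boundedness — exactly the conclusion drawn in the last line of the proposition's proof.

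For the converse, ``bounded $\Rightarrow$ easy'', I would argue by contradiction. Suppose $F \subseteq w_1^* \cdots w_n^*$; since to be hard $F$ must cover \emph{every} regular language, it suffices to exhibit one it cannot cover, and I take $\{a,b\}^*$. If some DFST $T$ satisfied $T(F) = \{a,b\}^*$, then by monotonicity of transduction $T(w_1^* \cdots w_n^*) = \{a,b\}^*$ as well, and the pigeonhole argument from the proof of Proposition~\ref{REGtypes} applies verbatim: reading a long power $w_i^{k}$ drives the finite-state $T$ through a repeated state, so every sufficiently long output word either carries a periodic factor or has bounded length. Since there exist arbitrarily long words over $\{a,b\}$ avoiding such factors, this contradicts surjectivity. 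Hence $F$ fails to cover $\{a,b\}^*$, is therefore not hard, and so is easy.

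I expect the only genuine content beyond citing the proposition to lie in the first direction: confirming that the ``one cycle-word per state'' structure really collapses to a bounded over-approximation. A priori the branching and nesting of cycles could produce many interleaved stars, but the normal form already furnished by Proposition~\ref{REGtypes} shows each term of the union is a plain concatenation of single starred blocks, whose finitely many component words bound the whole union. The converse direction rests entirely on the Thue-type fact — existence of arbitrarily long words over the output alphabet avoiding high powers — which the proposition already invokes, so here it is merely reused.
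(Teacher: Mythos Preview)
Your proposal is correct and matches the paper's approach: the paper simply states that the corollary follows from the proof of Proposition~\ref{REGtypes}, and you have unpacked precisely those two pieces of that proof --- the normal form $p x_1^* y_1 \cdots x_m^* y_m s$ giving boundedness in one direction, and the pigeonhole/aperiodic-word argument ruling out a surjection onto $\Sigma^*$ in the other. One tiny remark: when you invoke ``monotonicity of transduction'' you really mean monotonicity of the image map $X \mapsto T(X)$, which gives $\Sigma^* = T(F) \subseteq T(w_1^* \cdots w_n^*)$; this is exactly right, just worth phrasing carefully since $T$ itself is a partial function rather than a relation here.
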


\begin{remark}\label{boundeddfst}
	If regular language $F$ is easy, then $F \ledfst w_1^*w_2^*\ldots w_n^*$.
\end{remark}
\begin{proof}
	Indeed, since $F \subseteq w_1^*w_2^*\ldots w_n^*$, we just apply DFST $ID_F$ (which maps $w$ to $w$ iff $w\in F$) to the language $w_1^*w_2^*\ldots w_n^*$.
\end{proof}

\section{Generalized Nondeterministic Automata}\label{secGNA}

Regular realizability problems have corresponding computational model, called Generalized Nondeterministic Automata. By \emph{generalized nondeterministic automaton} $M_F$   (depending on filter $F$) we mean determinstic logspace Turing machine with advanced one-way read-only tape. GNA $M_F$ accepts word $w$ if there exists such word $adv \in F$, that $M$ accepts $w$ when $adv$ is writen on the advanced tape. We call the advanced tape \emph{advice tape}.

Let us describe the model a little more formally. Advice tape has an alphabet $\Delta$, and blank-symbol $\Lambda \not\in \Delta$. We define $M_F(w, \alpha)$ to be the function of two arguments -- word $w$ on the input tape and word $\alpha \in F \subseteq \Delta^*$ on the advice tape. After word $\alpha$ advice tape is filled with blank-symbols $\Lambda$. On each step GNA can move the head of advice tape and read the next symbol or not to move the head. We assume, that filter is always a non-empty language. So, the function $M_F(w, \alpha)$ equals $1$ if GNA $M_F$ reaches some accepting configuration on pair $(w,\alpha)$ and in the other case we assume that $M_F(w,\alpha) = 0$ (we assume, that GNA stops on each pair). We say that GNA $M_F$ accepts word $w$ if there is such advice $\alpha \in F$, that $M_F(w,\alpha) = 1$ and as usual by language $L(M_F)$ we mean all the words, accepted by $M_F$.

Why do we use the word automata? First, GNA was defined in \cite{VyaDM8} as a multi-head two-way automata with an additional read-only tape, but the equivalent model appeared to be more useful in proofs. The equivalence between multi-head two-way automata and logspace machines was proofed by Cobham in his unpublished paper as we know from \cite{Ibarra71}.

Each regular realizability problem $\reg(F)$ has the corresponding GNA $M_F$, by corresponding we mean, that $\A \in \reg(F)$ iff $\A \in L(M_F)$. 

\begin{theorem}{\cite{VyaPPI}}  
	
	$\reg(F) \lelog L(M_F)$. 

\end{theorem}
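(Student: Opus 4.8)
The plan is to prove the reduction by exhibiting the \emph{corresponding} GNA $M_F$ explicitly, so that for every DFA $\A$ the machine accepts the encoding $\langle\A\rangle$ exactly when $L(\A)\cap F\neq\es$; once this correspondence is in place, the reduction $\reg(F)\lelog L(M_F)$ is realized simply by the logspace-computable encoding map $\A\mapsto\langle\A\rangle$. Thus essentially all the content sits in the design and analysis of $M_F$, and the reduction step itself is routine (it is, up to the fixed encoding of automata, the identity).

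First I would describe $M_F$. On an input $w=\langle\A\rangle$ and an advice word $\alpha\in F$ written on the advice tape, $M_F$ simulates $\A$ on $\alpha$: it keeps on its work tape only the \emph{index of the current state} of $\A$, reads $\alpha$ one symbol at a time off the one-way advice tape, and for each letter $a$ consults the read-only encoding of $\A$ on the input tape to compute $\delta_\A(q,a)$, overwriting the stored state index. When the advice head reaches the blank symbol $\Lambda$ that marks the end of $\alpha$, $M_F$ tests whether the stored state is accepting in $\A$ and accepts iff it is. By construction $M_F(w,\alpha)=1$ iff $\alpha\in L(\A)$, so $M_F$ accepts $\langle\A\rangle$ iff there is an advice $\alpha\in F$ with $\alpha\in L(\A)$, i.e. iff $L(\A)\cap F\neq\es$, i.e. iff $\A\in\reg(F)$; this is exactly the claimed correspondence.

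For the space bound, note that between consecutive advice symbols the only data $M_F$ retains is a single state index together with the pointers needed for the transition lookup, all of size $O(\log|\A|)=O(\log|w|)$. Hence $M_F$ is a legitimate deterministic logspace machine with a one-way advice tape, that is, a genuine GNA, and the nondeterminism demanded by the existential quantifier in $\reg(F)$ is supplied precisely by the guessed advice $\alpha\in F$.

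The main obstacle is to make rigorous that this deterministic simulation really fits in logarithmic space and that a \emph{single one-way pass} over $\alpha$ is enough. The latter is where the match between the model and the problem is essential: a DFA scans its input left to right in exactly one pass, so the one-way restriction on the advice tape costs nothing, and reading the terminating $\Lambda$ is what tells $M_F$ when to perform the acceptance test. Once these two points are checked, the correspondence yields $\reg(F)\lelog L(M_F)$ immediately via the encoding map.
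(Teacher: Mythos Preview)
Your argument is correct and is exactly the natural construction: the ``corresponding'' GNA just simulates the input DFA on the advice word, keeping only the current state index on its work tape, and the one-way advice head suffices because a DFA reads its input in a single left-to-right pass. The paper does not give its own proof of this theorem --- it is quoted from~\cite{VyaPPI} --- so there is nothing to compare against; your write-up supplies precisely the argument that the citation stands in for. One small technicality you may want to add for completeness: $M_F$ should first check in logspace that its input really encodes a DFA over the alphabet $\Delta$ of the filter and reject otherwise, so that ill-formed inputs are handled consistently with the definition of $\reg(F)$.
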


And there is also a reduction in the other direction.

\begin{theorem}{\cite{VyaPPI}}  
	
	$L(M_F) \lelog \reg(F)$. 
	
\end{theorem}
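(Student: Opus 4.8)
The plan is to establish the logspace many-one reduction $L(M_F) \lelog \reg(F)$ by turning an input word $w$ for the GNA into a DFA $\A_w$ over the advice alphabet $\Delta$ whose language is exactly the set of \emph{accepting advices}
$$L_w \bydef \{\alpha \in \Delta^* \mid M_F(w,\alpha) = 1\}.$$
Once such $\A_w$ is built, correctness is immediate: by definition $w \in L(M_F)$ iff some $\alpha \in F$ satisfies $M_F(w,\alpha) = 1$, that is, iff $L(\A_w) \cap F = L_w \cap F \neq \es$, which is precisely the condition $\A_w \in \reg(F)$. So the whole content is to show that $L_w$ is regular and that a DFA for it is computable from $w$ in deterministic logarithmic space.

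Fix $w$ with $|w| = n$. Since $M_F$ is a deterministic logspace machine, for fixed $w$ its work tape uses $O(\log n)$ cells, so the set $C_w$ of \emph{internal configurations} -- a tuple consisting of the control state, the work-tape contents, the work-head position and the input-head position, but \emph{not} the advice-head position -- is finite of size $\poly(n)$. Because the advice tape is one-way and read-only, between two successive advances of the advice head the machine performs a deterministic computation depending only on the current internal configuration and on the single advice symbol currently scanned; this is exactly the behaviour of a DFA reading $\alpha$ at the rate of one letter per advice advance. I take $C_w$, together with two extra absorbing states $c_{\mathrm{acc}}$ and $c_{\mathrm{dead}}$, as the state set of $\A_w$, and the start configuration of $M_F$ on $w$ as the initial state.

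The transition $\delta(c,\sigma)$ is defined by simulating $M_F$ from internal configuration $c$ with the advice symbol $\sigma$ held fixed under the non-moving head: if the head eventually advances, $\delta(c,\sigma)$ is the resulting internal configuration; if $M_F$ accepts first, $\delta(c,\sigma) = c_{\mathrm{acc}}$; if $M_F$ rejects or revisits an internal configuration (hence loops without moving the head), $\delta(c,\sigma) = c_{\mathrm{dead}}$. Since $|C_w|$ bounds the number of distinct internal configurations, one of these three events must occur within $|C_w|+1$ steps, so each transition is computed by a simulation that stores a single internal configuration and a step counter bounded by $\poly(n)$, i.e.\ in space $O(\log n)$. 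The accepting states of $\A_w$ are $c_{\mathrm{acc}}$ together with those $c \in C_w$ from which $M_F$ accepts while reading only the blank symbols $\Lambda$ in the region past $\alpha$; this predicate is decided by the same kind of bounded simulation, again in logspace. The absorbing state $c_{\mathrm{acc}}$ records acceptance occurring at an interior cell of $\alpha$, while the blank-reading predicate records acceptance occurring after $\alpha$ is exhausted; together they guarantee $L(\A_w) = L_w$ exactly.

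Enumerating the $\poly(n)$ states and computing each entry of $\delta$ and each accepting flag by the bounded simulations above produces $\A_w$ within logarithmic space, completing the reduction. The delicate point is the handling of the stationary phases: one must use the finiteness of $C_w$ to detect non-termination -- looping without advancing the head, or looping in the blank region -- within $\poly(n)$ steps, so that $\A_w$ is genuinely finite and every transition is well defined even though $M_F$ need not halt on advices outside $F$. Everything else is routine book-keeping of where acceptance can occur relative to the end of the advice word.
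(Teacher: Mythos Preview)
The paper does not actually contain a proof of this theorem: it is stated with a citation to \cite{VyaPPI} and no argument is supplied in the present text, so there is nothing here to compare your proposal against.

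For what it is worth, your argument is the natural one and is correct. Freezing the input $w$ and using the polynomially many internal configurations of the logspace machine as the states of a DFA over the advice alphabet is exactly the right idea, and your handling of the two acceptance modes (the absorbing state $c_{\mathrm{acc}}$ for acceptance before the advice is exhausted, and the blank-simulation predicate for acceptance after) together with the configuration-count bound to detect looping in stationary phases covers the delicate points. This is the reduction one would expect to find in \cite{VyaPPI}.
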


If the filter contains only empty word, than GNA $M_{\{\eps\}}$ turns to a deterministic logspace machine.  In case when filter is the language of all words (under binary alphabet)  $M_{\Delta^*}$ turns to standard nondeterministic logspace machine. The intermediate case is the object of exploring of regular realizability problems.

\section{Main result}

Recall, that we call regular filter $F$ \emph{hard} if for an arbitrary regular language $R$ there exists DFST $T$, such that $R = T(F)$; in the other case we call regular filter $F$ \emph{easy}.

\begin{lemma} 
	If regular filter $F$ is hard, then the problem $\reg(F)$ is $\NL$-complete.
\end{lemma}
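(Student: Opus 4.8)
The plan is to show both membership of $\reg(F)$ in $\NL$ and $\NL$-hardness, and then invoke the machinery already developed. Membership in $\NL$ is essentially automatic: the problem $\reg(F)$ asks whether the regular language $L(\A)$ accepted by an input DFA $\A$ meets the fixed filter $F$. Via the GNA characterization, $\reg(F) \lelog L(M_F)$, and $M_F$ is a deterministic logspace machine augmented only with a one-way advice tape carrying a word from $F$. Nondeterministically guessing the advice symbol-by-symbol and simulating $M_F$ shows $L(M_F) \in \NL$, hence $\reg(F) \in \NL$. So the real content is hardness.

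For $\NL$-hardness I would reduce from a known $\NL$-complete problem, the natural candidate being $st$-connectivity (reachability in a directed graph), or equivalently $\reg(\Sigma^*)$, which by the discussion after the GNA theorems corresponds to ordinary nondeterministic logspace and is $\NL$-complete. The key observation is that Lemma~\ref{transductions-vs-logspace-reductions} turns the covering relation into a logspace reduction: if $F_1 \ledfst F_2$ then $\reg(F_1) \lelog \reg(F_2)$. Since $F$ is hard, it covers \emph{every} regular language, in particular it covers $\Sigma^*$, i.e.\ $\Sigma^* \ledfst F$. Therefore
$$\reg(\Sigma^*) \lelog \reg(F),$$
and as $\reg(\Sigma^*)$ is $\NL$-complete, this establishes that $\reg(F)$ is $\NL$-hard.

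Combining the two directions gives that $\reg(F)$ is $\NL$-complete. The proof is short precisely because the hard conceptual work was already packaged into Lemma~\ref{transductions-vs-logspace-reductions} (covering implies logspace reducibility) and into the definition of \emph{hard} filters (covering all regular languages, including the universal one). The main obstacle, if any, is ensuring the base case is genuinely $\NL$-complete: I must verify that $\reg(\Sigma^*)$ really coincides with the standard $\NL$-complete reachability problem. This follows from the remark after the GNA theorems, since $M_{\Delta^*}$ is exactly a standard nondeterministic logspace machine, so $\reg(\Sigma^*) \lelog L(M_{\Delta^*})$ and the latter captures all of $\NL$; one only needs that some $\NL$-complete language is among the regular languages covered by $F$, which the hardness hypothesis guarantees. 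With that fixed point in place, the reduction chain closes and the lemma follows.
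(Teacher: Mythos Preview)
Your hardness argument is exactly the paper's: since $F$ is hard, $\Sigma^* \ledfst F$, and Lemma~\ref{transductions-vs-logspace-reductions} then gives $\reg(\Sigma^*)\lelog\reg(F)$, with $\reg(\Sigma^*)$ being $\NL$-complete via the GNA remark.

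For membership in $\NL$ you take a direct route (guess the advice on the fly and simulate $M_F$), whereas the paper stays within the DFST framework: since $F$ is regular, one has $F=T'(\Delta^*)$ for some DFST $T'$ (the identity-on-$F$ transducer built from a DFA for $F$), hence $F\ledfst\Delta^*$ and so $\reg(F)\lelog\reg(\Delta^*)\in\NL$. Your approach is fine, but as written it has a small gap: when you ``nondeterministically guess the advice symbol-by-symbol'' you must also verify that the guessed word lies in $F$, otherwise you are simulating $M_{\Delta^*}$ rather than $M_F$. This is easy here precisely because $F$ is regular --- maintain the state of a DFA for $F$ alongside the simulation --- but it should be said explicitly. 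The paper's DFST back-reduction sidesteps this by reusing Lemma~\ref{transductions-vs-logspace-reductions} symmetrically, which is tidier given the machinery already in place. Also, your closing sentence (``some $\NL$-complete language is among the regular languages covered by $F$'') is phrased misleadingly: no $\NL$-complete language is itself regular; what you mean is that $\reg(R)$ is $\NL$-complete for some regular $R$ covered by $F$, namely $R=\Sigma^*$.
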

\begin{proof}
	The statement is obvious in case $F = \Delta^*$ -- the definition of GNA turns to the definition of nondeterministic logspace machine. In other case consider DFST $T$, such that $\Delta^* = T(F)$ and we obtain the reduction $\reg(\Delta^*) \lelog \reg(F)$ by lemma \ref{transductions-vs-logspace-reductions}. So $\reg(F)$ is $\NL$-hard. We use another DFST $T^\prime:\ F = T^\prime(\Delta^*) $  for reduction in the other direction.
\end{proof}

\begin{lemma}
	If regular filter $F$ is easy, then $\reg(F) \in \L$.
\end{lemma}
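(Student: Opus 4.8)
The plan is to reduce the easy case to the very concrete filter $w_1^*w_2^*\cdots w_n^*$ and then design a log-space algorithm directly for that filter. By Remark~\ref{boundeddfst} we know that if $F$ is easy then $F \ledfst w_1^*w_2^*\cdots w_n^*$, which by Lemma~\ref{transductions-vs-logspace-reductions} gives $\reg(F) \lelog \reg(w_1^*w_2^*\cdots w_n^*)$. Since $\L$ is closed under log-space reductions, it therefore suffices to show $\reg(w_1^*w_2^*\cdots w_n^*) \in \L$. This is the reduction I would carry out first, as it replaces an arbitrary easy filter by a single explicit bounded language whose structure we can exploit.

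The heart of the argument is then the algorithm for the bounded filter. Given a DFA $\A$ on input, I must decide whether $L(\A)$ contains some word of the form $w_1^{k_1} w_2^{k_2} \cdots w_n^{k_n}$ with $k_i \geq 0$. The idea is that as the filter word is read, the state of $\A$ only needs to be tracked at the ``phase boundaries'' between consecutive blocks $w_i^{k_i}$ and $w_{i+1}^{k_{i+1}}$. Within a single phase $i$, reading $w_i$ repeatedly induces a function on the states of $\A$; since $\A$ is deterministic and finite, the sequence of states reachable by $w_i, w_i^2, w_i^3, \ldots$ starting from any fixed state is eventually periodic with period and preperiod bounded by $|Q|$, so the \emph{set} of states reachable from a given state $q$ by some $w_i^{k}$ ($k \geq 0$) is computable. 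The plan is thus to compute, phase by phase, the set $S_i \subseteq Q$ of DFA states reachable after completing blocks $1$ through $i$, starting from $S_0 = \{q_0\}$, and to accept iff $S_n$ meets the accepting set of $\A$.

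The subtlety that must be handled carefully is keeping the whole computation in logarithmic space. A set $S_i \subseteq Q$ may have up to $|Q|$ elements, and $|Q|$ is polynomial in the input size, so storing $S_i$ explicitly as a bit-vector uses linear, not logarithmic, space. The fix I would use is to avoid materializing the sets: instead of propagating sets forward, I would phrase acceptance as the existence of a sequence of states $q_0 = p_0, p_1, \ldots, p_n$ with $p_n \in F_\A$ such that for each $i$ there exists an exponent $k_i \geq 0$ with $p_{i-1} \xrightarrow{\,w_i^{k_i}\,} p_i$ in $\A$. Checking the single reachability relation ``$p \xrightarrow{w_i^{k}} p'$ for some $k \geq 0$'' is itself a reachability question, but on a \emph{fixed} finite structure per phase, and I would argue it is decidable in log space because $w_i$ is a constant-length word (part of the fixed filter, not the input), so reading $w_i$ once is $O(1)$ work and the eventual periodicity bounds the relevant exponents by $|Q|$.

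The main obstacle, and the step I expect to require the most care, is precisely this log-space bookkeeping: combining the ``guess the boundary states $p_0, \ldots, p_n$'' idea with determinism so that the overall procedure stays in $\L$ rather than $\NL$. Naively guessing the $p_i$ is nondeterministic; to stay in $\L$ I would instead compute the reachable boundary sets deterministically but represent them succinctly, exploiting that $n$ is a fixed constant so only a constant number of phases occur, and that within each phase the reachable-state relation under $w_i^*$ depends only on the fixed $w_i$ and can be precomputed as part of the (constant-size) description of the algorithm for this particular filter. With $n$ and all $w_i$ fixed, the only genuinely unbounded object is the input DFA $\A$, and a single deterministic scan that tracks one DFA state at a time while consulting the precomputed $w_i^*$-reachability tables is what I would aim to show runs in log space.
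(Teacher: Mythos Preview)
Your reduction to the filter $w_1^*w_2^*\cdots w_n^*$ via Remark~\ref{boundeddfst} and Lemma~\ref{transductions-vs-logspace-reductions} is exactly how the paper begins, and your observation that the relevant exponents $k_i$ can be bounded by $|Q|$ (eventual periodicity of the action of a fixed word on a finite state set) is the right core idea. Where the proposal goes off the rails is the final paragraph, the $\L$ implementation.

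The specific error is the claim that ``the reachable-state relation under $w_i^*$ depends only on the fixed $w_i$ and can be precomputed as part of the (constant-size) description of the algorithm.'' That relation is a relation on the states of $\A$ and is determined by the transition function of $\A$, which is the \emph{input}; it cannot be baked into the algorithm. For the same reason, your set-tracking detour is unnecessary and, as you yourself note, a subset of $Q$ costs $|Q|$ bits.

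The fix is the one you already have the ingredients for but don't quite land on: drop sets entirely and enumerate exponent tuples. Since each $k_i$ may be taken in $\{0,1,\dots,|Q|\}$, loop over all $(k_1,\dots,k_n)\in\{0,\dots,|Q|\}^n$; for each tuple, simulate $\A$ deterministically on the single word $w_1^{k_1}\cdots w_n^{k_n}$ and test whether the resulting state is accepting. This uses $n$ counters of $O(\log|Q|)$ bits each ($n$ is a constant of the filter) plus one register for the current state of $\A$ --- total $O(\log|Q|)$ space, no nondeterminism, no tables. This is exactly the paper's argument, stated there in the equivalent GNA language: equip the log-space machine with $k$ counters, bound each counter by the (polynomial) number of configurations, and exhaustively try all advices $u_1^{i_1}\cdots u_k^{i_k}$.
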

\begin{proof}
	Recall, that all easy regular languages are bounded and there for  $F \ledfst u_1^*u_2^*\ldots u_k^*$ by remark \ref{boundeddfst}. That's why we only prove the lemma in case of languages of form $u_1^*u_2^*\ldots u_k^*$. For GNA $M_F$ we built an equivalent logspace machine $M$. We equip $M$ with $k$ counters, and program it to recursively try advices $u_1^{i_1}u_2^{i_2}\ldots u_k^{i_k}$. For each tuple of indices there is only finite number of different configurations of $M_F$ on input $w$ so since the number of configurations is polynomial of $|w|$, then each index-counter can also be bounded by polynomial of $|w|$, then machine $M$ can try all possible reasonable advices and verify whether GNA $M_F$ accepts word $w$.
	
\end{proof}

\begin{theorem}
	If regular language $F$ is hard, then the problem $\reg(F)$ is $\NL$-complete; if regular language $F$ is easy, then the problem $\reg(F)$ belongs to class $\L$.
\end{theorem}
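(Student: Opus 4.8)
The plan is to recognize that this theorem is an immediate consequence of the two lemmas stated immediately above it, once one observes that the predicates \emph{hard} and \emph{easy} partition the class of regular languages. No new combinatorial or complexity argument is needed; the work is entirely in assembling already-proved facts.

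First I would record the dichotomy explicitly. By definition a regular filter $F$ is \emph{easy} precisely when it fails to be \emph{hard}, so every regular language falls into exactly one of the two classes, and the two cases are both mutually exclusive and exhaustive. The Corollary following Proposition~\ref{REGtypes} even describes the split concretely --- easy languages are exactly the bounded ones --- but for the theorem itself only the tautological exhaustiveness of the two cases is required.

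Next, for the hard case, I would cite the first of the two preceding lemmas to conclude that $\reg(F)$ is $\NL$-complete. That lemma already supplies both directions: $\NL$-hardness via the reduction $\reg(\Delta^*)\lelog\reg(F)$ obtained from Lemma~\ref{transductions-vs-logspace-reductions} applied to a DFST witnessing $\Delta^*=T(F)$, and membership in $\NL$ via the reverse reduction coming from $F=T'(\Delta^*)$. Symmetrically, for the easy case I would cite the second preceding lemma to conclude $\reg(F)\in\L$, whose argument reduces $F$ to the bounded form $u_1^*u_2^*\cdots u_k^*$ by Remark~\ref{boundeddfst} and then simulates the GNA $M_F$ deterministically, enumerating the polynomially many relevant advices $u_1^{i_1}u_2^{i_2}\cdots u_k^{i_k}$ with $k$ counters.

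Consequently there is no genuine obstacle in this final step: the real content lives upstream, in the structural characterization of hard filters (Proposition~\ref{REGtypes}) and in the two complexity lemmas. The theorem merely glues them together, yielding a complete classification of deterministic RR-problems for regular filters into those lying in $\L$ and those that are $\NL$-complete --- a dichotomy that is sharp modulo the $\L\neq\NL$ conjecture.
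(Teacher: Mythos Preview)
Your proposal is correct and matches the paper's approach exactly: the paper gives no separate proof of this theorem, treating it as an immediate consequence of the two preceding lemmas together with the definitional fact that \emph{easy} means ``not hard.'' Your summary of how those lemmas work is accurate, and no additional argument is needed here.
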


Now we compare $\reg$ and $\reg^n$ problems.

\begin{proposition}
	For each non-empty regular filter $F$ nondeterministic regular realizability problem $\reg^n(F)$ is $\NL$-complete.
\end{proposition}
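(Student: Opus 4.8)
The plan is to show two things: that $\reg^n(F)$ is in $\NL$ for every regular filter $F$, and that it is $\NL$-hard for every non-empty regular $F$. For membership, I would observe that the corresponding computational model is a nondeterministic analogue of the GNA: instead of a deterministic logspace machine with an advice tape, we now have an NFA on the input together with a guessed advice word from $F$. Both the guessing of a path through the input NFA and the guessing of an accepting advice word in $F$ can be carried out by a nondeterministic logspace machine, since $F$ is regular and hence recognized by a fixed finite automaton whose current state needs only constant space, while tracking the current state of the input NFA needs only $O(\log n)$ space. So $\reg^n(F) \in \NL$.

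The interesting direction is $\NL$-hardness, and here lies the main obstacle: unlike the deterministic case, where hardness required the filter to be non-bounded (i.e.\ hard in the sense of Proposition~\ref{REGtypes}), I claim that for the nondeterministic problem \emph{every} non-empty regular filter suffices. The key insight is that with an NFA on input we no longer need the filter to ``cover'' a rich language via a transducer; instead we can push all the work into the freely-chosen input NFA. Concretely, I would reduce from the standard $\NL$-complete $s$-$t$-connectivity problem (equivalently, $\reg^n(\Delta^*)$). Fix any non-empty $F$ and pick one word $f \in F$, say $f = f_1 f_2 \cdots f_m$ over $\Delta$. Given an instance $G$ with vertices $s,t$ of directed reachability, I would build an input NFA $\B$ that accepts exactly the single word $f$ along paths that mirror the structure of $G$: the NFA simulates a guessed path in $G$ from $s$ to $t$, and along each such accepting computation it emits the fixed string $f$. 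Then $L(\B) \cap F$ is non-empty iff $L(\B)$ contains $f$ iff there is an $s$-$t$ path in $G$.

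The care needed is in the construction of $\B$ so that its size is logspace-computable in $|G|$ and so that $L(\B) \subseteq \{f\}$, guaranteeing that the only possible witness from $F$ is $f$ itself. One clean way: take a product-like construction whose states are pairs (vertex of $G$, position in $f$), with transitions advancing the $f$-position only when traversing an edge of $G$, padded with $\eps$-transitions or a suitable spreading of the letters of $f$ across the edges of any $s$-$t$ walk. Since $f$ has constant length but $G$-paths can be long, the simplest fix is to let $\B$ read $f_1\cdots f_m$ across the first $m$ edges and then read $\eps$ (or a fixed letter absorbed by a self-loop emitting nothing) along the remaining edges; alternatively, reduce from the variant of reachability along paths of a fixed bounded structure, or let $\B$ accept $\{f\}$ whenever \emph{some} $s$-$t$ walk exists by having $\B$ guess such a walk nondeterministically while outputting $f$ exactly once. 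The main obstacle is exactly this bookkeeping — ensuring the emitted word is \emph{always} the single fixed $f$ regardless of path length, so that membership in $F$ is equivalent to reachability. Once $\B$ is built in logspace, combined with the $\NL$ upper bound this gives $\NL$-completeness for every non-empty regular $F$.
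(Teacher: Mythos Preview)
Your approach is correct and matches the paper's: membership in $\NL$ via simultaneously guessing a word and tracking the states of the input NFA and a fixed automaton for $F$, and $\NL$-hardness via a logspace reduction from $s$--$t$ reachability that builds an input NFA accepting the fixed word $f\in F$ exactly when an $s$--$t$ path exists.

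However, you make the hardness construction considerably harder than it needs to be. The ``bookkeeping obstacle'' you worry about --- emitting the fixed word $f$ exactly once regardless of the length of the guessed walk --- evaporates once you recall that the input automaton is an \emph{NFA with $\eps$-transitions}. The paper's construction is simply: take $G$ itself as the NFA, label \emph{every} edge of $G$ by $\eps$, make $s$ the initial state, and attach a fresh chain of $|f|$ states after $t$ labeled by the letters of $f$, ending in the unique accepting state. Then any accepting run first traverses an $\eps$-labeled $s$--$t$ walk of arbitrary length and then reads $f$; conversely $f$ is accepted iff $t$ is reachable from $s$. There is no need for product states, spreading letters across edges, or bounding path lengths. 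Your alternative ``guess a walk while outputting $f$ exactly once'' is exactly this idea, so you had it --- you just didn't see that $\eps$-labels on all of $G$ already implement it.
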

\begin{proof}
	First we prove that $\reg^n(F) \in \NL$. Let NFA $\A$ be an input of the problem and $\B$ be an NFA recognizing $F$.  The NL-algorithm nondeterministically guesses paths from initial to final states in automata $\A$ and $\B$ and verifies that both paths correspond to the same word. 
	Now we prove that each nonempty language $L$ is $\NL$-hard. Let $w \in L$. We reduce the path problem to $\reg^n(L)$. Let $G(s,t)$ be an input of path problem. We built NFA $\A$ from graph $G$  by writing $\eps$ on each edge, converting $s$ to initial state and adding path labeled by $w$ from vertex $t$ to the only accepting state. So $w \in L(\A)$ iff there is a path from $s$ to $t$ in graph $G$. 
\end{proof}

\begin{prop} 
	If regular filter $F$ is easy then $\reg(F) \not\sim_{\log{}} \reg^n(F)$ modulo $\L \neq \NL$. 
\end{prop}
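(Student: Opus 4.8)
The plan is to deduce the statement immediately from the two results already proved about easy filters. Recall that since $F$ is easy the earlier lemma gives $\reg(F)\in\L$, while since $F$ is a non-empty regular filter the preceding proposition gives that $\reg^n(F)$ is $\NL$-complete. These two facts are all that is needed.

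First I would assume, toward a contradiction, that $\reg(F)\sim_{\log{}}\reg^n(F)$; in particular this yields a logspace reduction $\reg^n(F)\lelog\reg(F)$. The key step is then to invoke the closure of $\L$ under logspace many-one reductions: composing this reduction with a deterministic logspace decider for $\reg(F)$ runs again in logspace, so $\reg^n(F)\in\L$.

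Finally, because $\reg^n(F)$ is $\NL$-complete, its membership in $\L$ propagates to all of $\NL$, giving $\NL\subseteq\L$ and hence $\L=\NL$. This contradicts the standing hypothesis $\L\neq\NL$, so the assumed equivalence cannot hold.

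I do not expect a genuine obstacle here, since all the work has already been done in establishing the $\L$ upper bound for easy filters and the $\NL$-completeness of $\reg^n(F)$. The only point that deserves care is the composition step, which relies on the standard fact that $\L$ is closed under logspace reductions; everything else is bookkeeping around the definition of $\sim_{\log{}}$ as a two-sided reduction.
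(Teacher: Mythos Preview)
Your argument is correct and is exactly the intended one: the paper states this proposition without proof, as an immediate consequence of the two results you invoke --- that $\reg(F)\in\L$ for easy $F$ and that $\reg^n(F)$ is $\NL$-complete for every non-empty regular $F$ (non-emptiness being a standing assumption on filters in the paper). Your spelled-out contradiction via closure of $\L$ under logspace reductions is precisely what the reader is meant to supply.
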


\end{document}